\newtheorem{dfn}{Definition}
\newtheorem{remark}[dfn]{Remark}
\newtheorem{example}[dfn]{Example}
\theoremstyle{nonumberplain}
\newtheorem{proof}{Proof}
\newtheorem{thm}[dfn]{Theorem}
\newtheorem{theorem}[dfn]{Theorem}
\def\scaled{\let\onleft=\left\let\onright=\right}
\def\unscale{\let\onleft=\relax\let\onright=\relax}
\newcommand{\bbone}{\ensuremath{\mathbbm 1}}
\newcommand{\setdef}{\ensuremath{\;\vert\;}}
\newcommand{\ocmpl}{\ensuremath{^{\mathrm{c}}}}
\def\scaled{\let\onleft=\left\let\onright=\right}
\def\unscale{\let\onleft=\relax\let\onright=\relax}
\renewcommand{\set}[1]{\ensuremath{\onleft\{ #1\onright\}}\unscale}
\newcommand{\notperp}{\mathbin{\perp\kern-.9em/}}
\newcommand{\notsim}{\mathbin{\sim\kern-.9em/}}
\newcommand{\proj}[1]{\ket{x}\bra{x}}
\newcommand{\bin}{\ensuremath{\bgroup\scriptscriptstyle\text{\textsc{bin}}\egroup}}
\DeclareMathOperator{\atoms}{\bf Atoms}
\def\quotient#1#2{%
    \raise1ex\hbox{$#1$}\Big/\lower1ex\hbox{$#2$}%
}
\newcommand{\bbzero}{0}
\begin{document}

\title{Ignorance is a bliss: mathematical structure of many-box models}
\author{Tomasz I. Tylec}
\email{tomasz.tylec@ug.edu.pl}
\affiliation{Institute of Theoretical Physics and Astrophysics,
  Faculty of Mathematics, Physics and Informatics,
  University of Gdańsk,
  80-308 Gdańsk}
\author{Marek Ku\'s}
\affiliation{Center for Theoretical Physics,
  Polish Academy of Sciences,
  Aleja Lotnik\'ow 32/46,
  02-668 Warsaw, Poland}

\begin{abstract}
  We show that the propositional system of a many-box model
  is always a set-representable effect algebra.
  In particular cases of 2-box 
  and 1-box models it is an
  orthomodular poset and an orthomodular lattice respectively.
  We discuss the relation of the obtained results with the so-called
  Local Orthogonality principle.
  We argue that non-classical properties of box models
  are the result of a dual enrichment of the set of states
  caused by the impoverishment of the set of propositions.
  On the other hand, quantum mechanical models always have
  more propositions as well as more states than the
  classical ones. Consequently, we show that the
  box models cannot be considered as generalizations
  of quantum mechanical models
  and seeking for additional
  principles that could allow to ``recover quantum correlations''
  in box models is, at least from the fundamental point of view, pointless.  
\end{abstract}

\maketitle

\section{Introduction}

Consider the following simple model originated from
Popescu and Rohrlich\cite{popescu1994quantum}:
a system consisting of $k$ ``black-boxes'',
i.e.\ devices interacting with external world only by means of
input and output signals, where both sets of admissible inputs and
admissible outputs (which my vary depending on input), are finite.
A state of such system, which in this paper we will call a \emph{PR-state}
(to honor Popescu and Rohrlich), is defined by \emph{probabilities}
$P(\alpha_1\alpha_2\dots\alpha_k|a_1a_2\dots a_k)$
of getting a particular tuple of outcomes $(\alpha_1, \alpha_2, \dots,
\alpha_k)$ given a tuple of inputs $(a_1, a_2, \dots, a_k)$ that,
apart from usual requirements of positivity and normalization,
satisfy additionally the so-called \emph{no-signaling} properties,
\begin{equation}
  \label{eq:no-sign}
  \sum_{\alpha_i} P(\alpha_1\dots\alpha_i\dots\alpha_k|a_1\dots a_i\dots a_k) =
  \sum_{\beta_i} P(\alpha_1\dots\beta_i\dots\alpha_k|a_1\dots b_i\dots a_k)
\end{equation}
for all boxes enumerated by $i$. In plain words, these requirements
express the fact changing of input for one box should not affect results,
if we are not interested in the outcome that this very box provides.
We call such system an \emph{$k$-box model}.
In more physical terms, we can think of a box as an experimental apparatus
that can measure one observable from a specified finite set.
The observables are labeled by an input values,
so the input value chooses an observable to be measured,
and the outcome of a measurement is returned on output.
Then the $k$-box model is a set of $k$ such devices,
each of them performing localized measurement.

\begin{figure}[b]
  \centering
  \begin{tikzpicture}
    \tikzset{label/.style = {
        text centered,
        text width=5em,
        node distance=2cm,
        execute at begin node=\setlength{\baselineskip}{11pt},
        font=\fontsize{9pt}{11pt}\selectfont},
      box/.style = {label,
        rectangle, rounded corners, draw=black, very thin,
        pattern=crosshatch,
        text width=3em, text height=3em},
      arr/.style = {->, shorten <= 2pt, shorten >= 2pt}
    };

    \node[box] (B1) {};
    \node[label, above of=B1] (I1) {input $a_1$\\$1, 2, \dots n$};
    \node[label, below of=B1] (O1) {output $\alpha_1\in\mathcal U_{a_1}$};

    \node[box, right of=B1] (B2) {};
    \node[label, above of=B2] (I2) {input $a_2$\\$1, 2, \dots n$};
    \node[label, below of=B2] (O2) {output $\alpha_2\in\mathcal U_{a_2}$};

    \node[label, right of=B2] (Be) {$\dots$};

    \node[box, right of=Be] (Bk) {};
    \node[label, above of=Bk] (Ik) {input $a_k$\\$1, 2, \dots n$};
    \node[label, below of=Bk] (Ok) {output $\alpha_k\in\mathcal U_{a_k}$};

    \draw[arr] (I1) -- (B1);
    \draw[arr] (B1) -- (O1);

    \draw[arr] (I2) -- (B2);
    \draw[arr] (B2) -- (O2);

    \draw[arr] (Ik) -- (Bk);
    \draw[arr] (Bk) -- (Ok);
  \end{tikzpicture}
  \caption{A $k$-box model, i.e.\ system consisting of $k$
    devices that produce output value upon providing it with an input.
    Devices are not connected with each other, thus can be placed in the
    space-like separated regions.}
  \label{fig:boxes}
\end{figure}
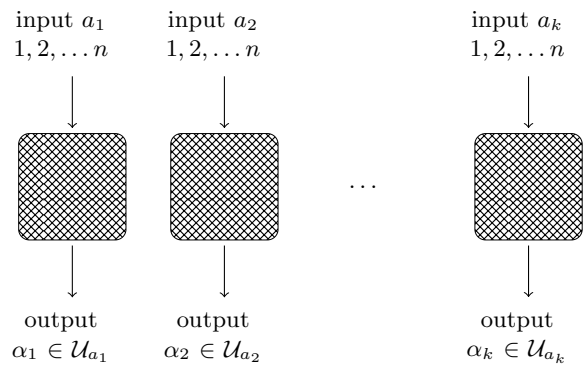

Box models proved its usefulness in (quantum) information theory 
(cf.~Refs.~\onlinecite{PhysRevLett.97.120405,Masanes2006,barrett2005nonlocal,
  Pironio2010,Buhrman2010,Gallego2013,PhysRevLett.96.250401,Buhrman2006}
and many others, the original work of Popescu and Rohrlich has over 470 citation
at the time of writing this paper). 
Most, if not all, use-cases for box models rely on the fact that
some PR-states allow to obtain much stronger correlations between
outputs of boxes than it is possible in quantum mechanics.
This obviously resulted
in numerous works where box models were applied to discussion of foundations of
physics\cite{Pawlowski:2009aa,Allcock2009,Masanes2011,Oppenheim2010,
  Fritz2013,Navascues2009,Barnum2010a}.

However, various applications of box models are not accompanied 
with a rigorous and deep analysis of the mathematical structure that is inherently
present in them. It is clear, that the $k$-box model \emph{defines}
a new probability theory that is claimed to be more general than the
quantum probability theory.
This paper is a continuation of our program of filling this gap
and trying to understand properties of the \emph{box probability theory}.
Previously, we characterized the mathematical structure of 2-box model
with a binary input and output\cite{Tylec2015a},
followed by a general characterization of arbitrary 2-box models\cite{tylec2015-2}.
Finally, in Ref.~\onlinecite{tylec2015-3} we focused on description of
how the mathematical structure of 2-box world arise from 1-box models.
The present paper summarizes and extends all these works providing
a general description of an arbitrary $k$-box model along with a discussion
of how it emerges from structures of its components.

We work within the framework of the so-called \emph{quantum logics.}
This approach stems from the works of Birkhoff and von Neumann\cite{birkhoff1936logic} 
on foundations of quantum mechanics, later developed by 
Mackey\cite{mackey1963mathematical}, Piron\cite{Piron1976},
Ludwig\cite{Ludwig1983}, among the others.
Then the whole program of logic-based approach to foundations
of physics slowly drifted apart from physics in the direction of pure
mathematical study of various structures defined by a few physically inspired
axioms. Nevertheless, this approach can be considered as a rigorous
generalization of the Kolmogorov's probability theory competing with
a more traditional quantum probability theory based on operator algebras.
While the latter is unquestionably superior when it comes to analytical tools,
the quantum logic approach, due to its simplicity and very fundamental nature,
fits perfectly to rigorous study of box-models.

The paper is organized in the following way: firstly we briefly introduce
some quantum logic structures and their properties that will be relevant
for us. Then we discuss the structure of $1$-box models and provide
a general mathematical description of an arbitrary $k$-box model.
The special case of $2$-box models is discussed as an example.
Finally, we discuss Local Orthogonality Principle\cite{Fritz2013,Sainz2014}
in the light of presented results.

\section{Logic-based approach}
\label{sec:orthomodular-structures}

We recall that a logic-based approach
to physical models starts from the observation
that we interact with a world by experiments
and the most basic type of experiment is the so-called
\emph{experimental question}; it yields only two results:
either ``yes'' or ``no'' (we will interchangeably use ``true/false'').
Since it is rather typical that a one physical property
can be examined equivalently by a various experimental setups,
we usually work on equivalence classes of experimental questions,
which are called \emph{propositions} (cf.~Ref.~\onlinecite{Piron1976}
for more detailed discussion).
The set of all propositions about a physical system
has to satisfy certain properties. This leads to the following structures:

\begin{dfn}[see e.g.~Ref.~\onlinecite{dvurecenskij2000new}, Def.~1.2.1]%
\label{thm:def-effect-algebra}
  An effect algebra $(E, \oplus, \bbzero, \bbone)$ is a set $E$
  with a partially defined binary operation $\oplus$ and a distinguished elements
  $\bbzero$ and $\bbone$ satisfying:
  \begin{enumerate}
  \item[E1] whenever $p\oplus q$ is defined then $q\oplus p$ is defined and
    $p\oplus q = q\oplus p$,
  \item[E2] if $q\oplus r$ and $p\oplus(q\oplus r)$ are defined then
    $p\oplus q$ and $(p\oplus q)\oplus r$ are defined and
    \[
      p\oplus(q\oplus r) = (p\oplus q) \oplus r,
    \]
  \item[E3] for every $p$ there exists a unique $q$ such that $p\oplus q$ is
    defined and $p \oplus q = \bbone$,
  \item[E4] whenever $p\oplus \bbone$ is defined, $p=\bbzero$.
  \end{enumerate}
\end{dfn}

The interpretation is rather clear. The distinguished elements $\bbzero$
and $\bbone$ stand for null and trivial propositions,
i.e.\ ones represented by experimental questions that always yield ``no''
and ``yes'', respectively.
From quantum mechanics we learned that logical ``or'' operator does not
make sense for arbitrary pair of propositions, thus we implement
``or'' by a partially defined binary operation $\oplus$. E1 and E2 ensures
that this ``or'' is good enough.
We write $p\perp q$ whenever $p\oplus q$ exists and we say
that $p, q$ are \emph{orthogonal} or \emph{disjoint}.
Such propositions are exclusive: both cannot be simultaneously true.
This justifies E4.
Finally, for any proposition we can always consider its negation
(we simply interchange answers). E3 ensures that there is a proposition 
corresponding to that negation.
We can also introduce
a partial order relation $p \leq q$ whenever there exists $r\in E$
such that $q = p \oplus r$ (see~e.g.~Ref.~\onlinecite{dvurecenskij2000new}, Prop.~1.2.3).
It is clear that $p \leq q$ means that whenever $p$ is true so is $q$.

Element $p\in E$ is called an \emph{atom} whenever $0 \leq q \leq p$
implies that either $q = 0$ or $p = 0$. An effect algebra is called
\emph{atomic} if for any $p\in E$ there exists an atom $a\leq p$
and \emph{atomistic} if any element $p\in E$ is a $\oplus$-sum of atoms.
Atoms represent the most elementary propositions
and atomicity means that any proposition can be build from these elementary
propositions -- assumption that is reasonable, at least for systems
with finite number of degrees of freedom.

It is interesting to note, that
although the E2 property allows us to drop parentheses
in $\oplus$-sums with more than two elements,
mutual orthogonality (so logical exclusiveness)
of components does not guarantee that
their $\oplus$-sum exists. The special case is:

\begin{dfn}[cf.~Ref.~\onlinecite{dvurecenskij2000new}, after Def.~1.5.4]
\label{thm:coherence-law}
  An effect algebra $(E, \oplus, \bbzero, \bbone)$ satisfies
  a \emph{coherence law} if
  \[
    p\perp q, q\perp r, r\perp p \implies p\oplus q\oplus r \text{ is defined,}
  \]
  in other words, whenever $\oplus$-sums exists
  for mutually orthogonal elements.
\end{dfn}

A typical example of an effect algebra is a set of all POVMs on
a Hilbert space $\mathfrak H$, but it might be much more abstract
and complicated structure though, like the set of compressions
on a certain kind of Jordan algebras\cite{Majewski2010}.

When we focus on an order structure of propositions, instead of
partially defined logical ``or'', following definition naturally emerges:

\begin{dfn}[cf.~Ref.~\onlinecite{ptak1991orthomodular}]
  \label{thm:def-qlogic}
  An \emph{orthomodular poset} (\emph{orthoposet} in short) is a partially
  ordered set $L$ with a map $\ocmpl\colon L\to L$ such that
  \begin{enumerate}
  \item[L1] there exists the greatest (denoted by $\bbone$)
    and the least (denoted by $0$) element in $L$,
  \item[L2] map $p \mapsto p\ocmpl$ is order reversing, i.e.\
    $p \le q$ implies that $q\ocmpl \le p\ocmpl$,
  \item[L3] map $p \mapsto p\ocmpl$ is idempotent, i.e.\
    $(p\ocmpl)\ocmpl = p$,
  \item[L4] for a countable family $\set{p_i}$, s.t.\ $p_i \le p_j\ocmpl$
    for $i\neq j$, the supremum $\bigvee \set{p_i}$ exists,
  \item[L5] if $p\le q$ then $q = p \vee(q\wedge p\ocmpl)$
    (orthomodular law),
  \end{enumerate}
  where
  $p\vee q$ is the least upper bound
  and $p \wedge q$ the greatest lower bound of $p$ and $q$.
\end{dfn}

Interpretation is again straightforward: $p\le q$ means that whenever $p$ is
true $q$ is also true; $p\mapsto p\ocmpl$ maps proposition to its negation
(justified by L2 and L3). If $p\le q\ocmpl$ then $p, q$ cannot both be true at
the same time, so we write $p\perp q$ and call them \emph{orthogonal} or
\emph{disjoint} like previously. L4 states then that for mutually exclusive
propositions it should be possible to construct least proposition greater than
all of them: the logical ``or'' over this set. Only L5 has no direct
interpretation however has profound technical importance.

Since an effect algebra $E$ is a bounded partially ordered set
(i.e.~satisfies L1) and for $p\in E$
we might define $p\ocmpl$ to be equal to the unique element
from E3 one might wonder when an effect algebra is actually
an orthomodular poset:

\begin{thm}[see~Ref.~\onlinecite{dvurecenskij2000new}, Thm.~1.5.5]
  If an effect algebra $(E, \oplus, \bbzero, \bbone)$ satisfies
  a coherence law, then it is an orthomodular poset and
  $a\vee b = a\oplus b$ whenever $a\perp b$.
  Conversely, every orthomodular poset is an effect algebra satisfying
  coherence law.
\end{thm}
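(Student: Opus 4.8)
The plan is to verify the five axioms L1--L5 for the pair $(E,\le,\ocmpl)$, where $\le$ is the order induced by $\oplus$ and $p\ocmpl$ is the unique complement supplied by E3, and then to pin down the join formula $a\vee b = a\oplus b$. First I would assemble the standard effect-algebra toolkit that holds \emph{without} coherence: the cancellation law (if $p\oplus q = p\oplus r$ then $q=r$), the induced difference, the fact that $p\mapsto p\ocmpl$ is an order-reversing involution with $\bbzero\ocmpl=\bbone$, and the basic equivalence $p\perp q \iff p\le q\ocmpl$. These immediately secure L1 (the bounds $\bbzero,\bbone$), L2 (order reversal) and L3 (involutivity), so the whole weight of the coherence hypothesis falls on the join formula, on L4 and on L5.

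The heart of the argument is the identity $a\vee b = a\oplus b$ for $a\perp b$, and this is exactly where coherence enters. That $a\oplus b$ is an upper bound of $a$ and $b$ is immediate from the definition of $\le$ together with E1. To see it is the \emph{least} upper bound I would take an arbitrary upper bound $c$, so that $a\le c$ and $b\le c$. Using $p\perp q\iff p\le q\ocmpl$, these two relations read $a\perp c\ocmpl$ and $b\perp c\ocmpl$; together with the hypothesis $a\perp b$ the triple $\{a,b,c\ocmpl\}$ is pairwise orthogonal. The coherence law then guarantees that $a\oplus b\oplus c\ocmpl$ is defined, whence $(a\oplus b)\perp c\ocmpl$, i.e.\ $a\oplus b\le (c\ocmpl)\ocmpl = c$. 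Thus $a\oplus b$ lies below every upper bound and is the supremum.

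With the join formula in hand, L5 follows cheaply. Given $p\le q$, write $q=p\oplus r$ with $r$ the unique difference; since $p\perp r$ the join formula gives $q=p\vee r$. Both $r\le q$ and $r\le p\ocmpl$ hold, so $r$ is a lower bound of $q$ and $p\ocmpl$; if $s$ is any other such lower bound then $p\perp s$ and $q$ is an upper bound of $p$ and $s$, whence $p\oplus s = p\vee s\le q = p\oplus r$ and cancellation yields $s\le r$. Hence $r=q\wedge p\ocmpl$ and $q=p\vee(q\wedge p\ocmpl)$, which is L5. For L4 I would first promote the join formula to arbitrary \emph{finite} mutually orthogonal families by induction, coherence ensuring at each step that the partial $\oplus$-sums remain defined and remain the relevant suprema. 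The converse direction is then largely bookkeeping: on an orthomodular poset one sets $a\oplus b := a\vee b$ precisely when $a\perp b$ and checks E1--E4 and the coherence law directly, the one point needing care being that E2 (associativity of $\oplus$) must be read off from associativity of $\vee$, which in turn uses the orthomodular law to keep the nested joins under control.

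I expect the genuine obstacle to be L4 in its stated \emph{countable} form: coherence is a finitary condition and by itself secures only finite orthogonal joins, so extracting suprema of countably infinite orthogonal families requires an additional $\sigma$-orthocompleteness input (or a reading of ``orthomodular poset'' in which L4 is imposed only for finite families). Isolating the precise hypothesis that bridges the finite coherence law and the countable supremum demanded in L4 is the step I would treat most carefully.
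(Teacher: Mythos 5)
The paper offers no proof of this theorem: it is imported verbatim, with attribution, from the cited reference (Dvure\v{c}enskij--Pulmannov\'a, Thm.~1.5.5), so the only meaningful comparison is with the standard textbook argument --- which is essentially what you have reconstructed, correctly. Your central step is exactly the classical one: given an upper bound $c$ of orthogonal $a,b$, translate $a\le c$ and $b\le c$ into $a\perp c\ocmpl$ and $b\perp c\ocmpl$, apply coherence to the pairwise orthogonal triple $\{a,b,c\ocmpl\}$, and read off $a\oplus b\le c$. Your derivation of L5 is also sound: from $p\oplus s\le p\oplus r$ one gets $(p\oplus s)\oplus t=p\oplus r$ for some $t$, reassociates via E2, and cancels $p$ to conclude $s\le r$, so $r=q\wedge p\ocmpl$. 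Your inductive extension of coherence to finite families (apply the three-element law to the partial sum, the next element, and, for leastness, an upper bound's complement) is the standard bookkeeping.

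Two fine points. First, your worry about L4 is not just prudence --- it is a genuine mismatch between the theorem as quoted and the paper's own Definition~3, which demands \emph{countable} orthogonal suprema. Coherence is finitary and cannot deliver these: the Boolean algebra of finite and cofinite subsets of $\naturals$ is an effect algebra satisfying the coherence law, yet the countable orthogonal family of even singletons has no supremum there, since any cofinite upper bound can be shrunk. The cited source indeed defines orthomodular posets with L4 only for pairs (equivalently, finite families), which is the correct reading; the paper is insulated from the discrepancy only because, as its own Remark notes, all structures it analyzes are finite. Second, a small misdiagnosis in your converse: associativity of $\vee$ is automatic in any poset whenever the relevant suprema exist (both $(p\vee q)\vee r$ and $p\vee(q\vee r)$ compute $\sup\{p,q,r\}$), and the orthogonality bookkeeping for E2 needs only L2 and L4. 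Where the orthomodular law L5 is genuinely indispensable is E3: uniqueness of the complement (if $p\perp q$ and $p\vee q=\bbone$, then $q\le p\ocmpl$ and L5 gives $p\ocmpl=q\vee(p\ocmpl\wedge q\ocmpl)=q\vee\bbzero=q$), and likewise the cancellation property you use throughout. This does not affect the correctness of your overall plan, but the ``point needing care'' in the converse sits in E3, not E2.
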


Finally, we recall that an \emph{orthomodular lattice}
is an orthomodular poset in which each pair of elements
$a, b$ has its supremum $a\vee b$ (or equivalently, infimum $a\wedge b$);
a \emph{Boolean algebra} is an orthomodular lattice in which
distributive law is satisfied:
\begin{equation*}
  a\vee (b\wedge c) = (a\vee b)\wedge(a\vee c)
\end{equation*}

A typical examples are: the set of all projectors on a Hilbert space
(an orthomodular lattice), the family of measurable sets on a measure space
(Boolean algebra) and the family of all subsets of a finite set.
Physical interpretation is obvious: projectors represent ``yes-no'' measurements
in quantum mechanics, while the latter two examples describe classical
physical systems (the underlying set is a phase space of a system).

Another family of examples is represented by:

\begin{dfn}[see~Ref.~\onlinecite{ptak1991orthomodular}, Sec. 1.1]
  Let $\Delta$ be a family of subsets of some set $\Omega$
  with partial order relation given by set inclusion
  and $A\ocmpl = \Omega\setminus A$ satisfying:
  \begin{enumerate}
  \item[C1] $\emptyset\in\Delta$,
  \item[C2] $A\in \Delta$ implies $\Omega\setminus A \in \Delta$,
  \item[C3] for any countable family $\set{A_i}\subset \Delta$
    of mutually disjoint sets
    $\bigcup \set{A_i} \in \Delta$.
  \end{enumerate}
  Then $(\Omega, \Delta)$ is called a \emph{concrete} orthoposet.
\end{dfn}

In particular, let $\Omega = \set{1,2,\dots 2n}$ and $\Delta$ be a family
of subsets with even number of elements. Then $(\Omega, \Delta)$
is a concrete orthoposet which is orthomodular lattice for $n=2$
and Boolean algebra for $n=1$.

A physical system can be provided in different states.
They are distinguished by different outcomes of experimental questions.
Moreover, we do not require that an outcome of an experimental question
will be the same each time we run it on a system in particular state.
This leads to identification states with functions that assign to a proposition
$p$ a value from unit interval $[0, 1]$, which we will interpret
as a \emph{probability} of ``true'' answer for an experimental question
representing proposition $p$; precisely:

\begin{dfn}[cf.~Ref.~\onlinecite{dvurecenskij2000new}, Def.~1.3.3]
  A \emph{state} $\rho$ on an effect algebra $(E, \oplus, \bbzero, \bbone)$
  is a map $\rho\colon E \to [0, 1]$, s.t.\
  \begin{enumerate}
  \item[S1] $\rho(\bbone) = 1$,
  \item[S2] for a family $\set{p_i}_{i=1}^n$, s.t.\ $p_1\oplus p_2\oplus\dots\oplus
    p_n$ is defined $\rho(\bigoplus_{i=1}^n p_i) = \sum_{i=1}^n p_i$.
  \end{enumerate}
  We will denote by $\mathcal S(E)$ the set of all states
  on an effect algebra $E$.
  The same definition applies to orthoposets.
\end{dfn}

\begin{remark}
  Typically \emph{$\sigma$-additive states} are discussed, but all
  structures analyzed in the sequel have finite number of elements, thus we
  avoid unnecessary technicalities.
\end{remark}

The set of admissible states $S$ of a physical system described
by an effect algebra $E$ might be a proper subset of $\mathcal S(E)$.
In that case however we require that the $S$ has enough states
to distinguish different propositions by experiments, precisely:
\begin{equation}
  \label{eq:state-dist}
  \forall \rho\in S, \rho(p) = \rho(q)\quad \iff \quad p = q
\end{equation}
or even that the set propositions is \emph{order determining}
\begin{equation}
  \label{eq:order-det}
  \forall \rho\in S, \rho(p) \leq \rho(q)\quad \iff\quad p \leq q.
\end{equation}

\begin{dfn}[see Ref.~\onlinecite{dvurecenskij2000new}, Def.~1.10.1]
  \label{thm:compat-def}
  Elements $p, q$ of an effect algebra $E$ are called
  \emph{compatible} whenever there are $p', q', r\in E$
  such that $p'\oplus q'\oplus r$ is defined
  and $p = p' \oplus r$ and $q = q'\oplus r$
\end{dfn}

Compatible elements can be described using classical,
i.e.\ Kolmogorovian, probability
(cf.~Ref.~\onlinecite{ptak1991orthomodular}, Thm. 1.3.23).

\section{Propositions in box models}
\label{sec:box-structure}

We will follow the standard approach to logic-based description
of physical systems (see e.g.~Ref.~\onlinecite{Piron1976,
  mackey1963mathematical} for more detailed discussion).
We start by describing the set of propositions of a single box model. Let us fix
notation: input values will be enumerated $1, 2, \dots, N$, and the set of outcomes
for the input $i$ will be denoted by $\mathcal U_i$ (we remind that all these sets
are finite).
Clearly
\begin{equation}
  \label{eq:question}
  \text{``does an input value $a$ result in an output from
    $\mathcal A$?''}
\end{equation}
where $\mathcal A$ is a subset of all admissible outputs for input $a$
is a good question about a one box system. 
We will denote by
\begin{equation*}
  [a\in\mathcal A] \text{ or } [a\alpha] \quad\text{if $A=\set{\alpha}$}
\end{equation*}
the proposition represented by that question.
Before we dwell into analysis of the structure of the set
of all propositions, let us observe that
any PR-state $P$ on a 1-box model should define
a proper state $\rho_P$ on the structure of propositions.
In particular:
\begin{align*}
  \rho_P([a\alpha]) &= P(\alpha|a),\\
  \rho_P([a\in\mathcal A]) &= \sum_{\alpha\in\mathcal A}P(\alpha|a).
\end{align*}
Since we do not restrict PR-states anyhow (no-signaling is trivial condition
for 1-box), we immediately get that
\begin{equation*}
  [a\in\mathcal A]\leq [b\in\mathcal B]\quad\iff\quad 
  \begin{cases}
    \mathcal B = \mathcal U_b,\\
    \mathcal A = \emptyset,\\
    a=b \text{ and } \mathcal A\subset \mathcal B
  \end{cases}
\end{equation*}
The special case $[a\in\emptyset]$ represents the null proposition:
the one that is always false. Such a proposition is trivially in $\le$
relation with any other proposition. Moreover $[a\in\emptyset] =
[b\in\emptyset]$, since an experimental question that always results
in the ``false'' answer is in both equivalence classes.
Similarly $[b\in\mathcal U_b]$ represents a trivial proposition
that is always true. Clearly it is in $\le$ relation only with itself
and $[a\in\mathcal U_a] = [b\in\mathcal U_b]$.

The set of all propositions $\mathfrak B$ about a $1$-box model
apart from elements of the form $[a\in\mathcal A]$ should also
contain their formal $\oplus$-sums
\begin{equation*}
  [a\in\mathcal A]\oplus[b\in\mathcal B]
  \text{ whenever }
  \rho_P([a\in\mathcal A]) + \rho_P([b\in\mathcal B]) \leq 1,
\end{equation*}
so that
\begin{equation*}
\rho_P([a\in\mathcal A])\oplus\rho_P([b\in\mathcal B]) =
\rho_P([a\in\mathcal A]) + \rho_P([b\in\mathcal B]).
\end{equation*}
Again, since there are no restrictions on $P$ other
than positivity and normalization,
\begin{equation*}
  [a\in\mathcal A]\oplus[b\in\mathcal B] \text{ is defined }
  \quad\iff\quad a = b \text{ and } \mathcal A\cap \mathcal B=\emptyset.
\end{equation*}
It follows then that
\begin{equation*}
  [a\in\mathcal A] = \bigoplus_{\alpha\in\mathcal A} [a\alpha]
  \quad\text{and}\quad
  [a\in\mathcal A]\oplus[a\in\mathcal U_a\setminus\mathcal A] = [a\in\mathcal U_a]
\end{equation*}
and
\begin{equation*}
  [a\in\mathcal A]\oplus[a\in\mathcal U_a\mathcal A] = [a\in\mathcal U_a].
\end{equation*}
Consequently, one immediately gets that the set
\begin{equation*}
  \mathfrak B = \set{[a\in\mathcal A] \setdef
    \mathcal A\subset \mathcal U_a \text{ and } a = 1,\dots, N}
\end{equation*}
of all propositions on a $1$-box model, with the relation $\le$ and the map
\begin{equation*}
  [a\in\mathcal \mathcal A]\ocmpl = [a\in \mathcal U_a\setminus \mathcal A]
\end{equation*}
is a concrete orthomodular lattice (cf.~Refs.~\onlinecite{Tylec2015a,tylec2015-2}).
Denote by
\begin{equation*}
  \Gamma(\mathfrak B) = \mathcal U_1\times\mathcal U_2\times\cdots\times \mathcal U_n
\end{equation*}
the phase space associated with a truly classical $1$-box model
(where a point of a phase space contains information about output
the box will give for any input value).
To each proposition $[a\in\mathcal A]$ we assign a subset of $\Gamma$
\begin{equation*}
  \mathcal U_1\times\cdots\times
  \mathcal U_{a-1}\times\mathcal A\times\mathcal U_{a+1}\times\cdots
  \times\mathcal U_n.
\end{equation*}
All such subsets form a concrete logic and the order relation agrees
with the order on $[a\in\mathcal A]$. 
Consequently, we can identify propositions of the $1$-box model
with the subsets of above form.
Finally, let use remark that the logic $\mathfrak B$ is atomistic and elements
$[a\alpha] \equiv [a\in\set{\alpha}]$ are atoms in $\mathfrak B$.

\begin{example}
  Let us denote by $\mathfrak B_\bin$ a concrete logic of a
  $1$-box model with binary input and output.
  Clearly $\Gamma(\mathfrak B_\bin) \equiv \Gamma_\bin = \set{0, 1}\times\set{0, 1}$.
  For readability, let us denote the input value $0$ by $x$ and the input value $1$ by $y$.
  The logic consists of 6 elements:
  \begin{align*}
    \emptyset
    &,& \set{0}\times\set{0, 1} \equiv [x0]
    &,& \set{1}\times\set{0, 1} \equiv [x1], \\
    \set{0, 1}\times\set{0} \equiv [y0]
    &,& \set{0, 1}\times \set{1} \equiv [y1]
    &,& \Gamma_2
  \end{align*}
\end{example}

Now let us discuss the set of propositions of a $k$-box model.
We will denote it by $\mathfrak B^{\otimes k}$,
however the use of $\otimes$ symbol should not be linked with
the notion of tensor product, but rather a traditional way
of indicating composite systems in physics. 

\begin{theorem}
  \label{thm:effect-algebra}
  Let $(\Gamma, \mathfrak B)$ be a concrete logic of a single box.
  Propositions of a $k$-box model $\mathfrak B^{\otimes k}$
  are described by the concrete effect algebra of subsets
  of $\Gamma^{k}$ generated by
  \begin{equation*}
    \mathcal A = \set{[a_1\alpha_1]\times\dots\times[a_k\alpha_k] \setdef
      a_i=1,\dots,n; \alpha_i \in \mathcal U_{a_i}},
  \end{equation*}
  where $p \oplus q$ is defined whenever $p\cap q = \emptyset$
  and $\Gamma^k \setminus (p\cup q)$ can be decomposed into the
  union of mutually disjoint elements from $\mathcal A$.
  In that case, $p\oplus q = p \cup q$.
\end{theorem}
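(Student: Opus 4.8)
The plan is to set up an explicit, faithful representation of the propositions of the $k$-box model as a distinguished family $V$ of subsets of $\Gamma^k$, and then to check the effect-algebra axioms E1--E4 directly for $(V,\cup,\emptyset,\Gamma^k)$. First I would fix the representation. The elementary joint question ``do inputs $(a_1,\dots,a_k)$ return outputs $(\alpha_1,\dots,\alpha_k)$?'' is identified with the generator $[a_1\alpha_1]\times\dots\times[a_k\alpha_k]\in\mathcal A$, a subset of $\Gamma^k$ built factorwise from the single-box atoms; a PR-state $P$ induces the evaluation $\rho_P$ whose value on this generator is $P(\alpha_1\dots\alpha_k\mid a_1\dots a_k)$, extended additively over disjoint unions of generators. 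A convenient tool is the family of \emph{deterministic} PR-states: each point $\omega\in\Gamma^k$ (a local deterministic strategy for all boxes) gives a no-signaling product distribution $P_\omega$ with $\rho_{P_\omega}(r)=1$ if $\omega\in r$ and $0$ otherwise, for any $r$ that decomposes into disjoint generators. In particular these states already separate distinct subsets, so the representation $r\mapsto r$ will be faithful.

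The crucial lemma is that $\rho_P$ is \emph{well defined} on every subset admitting a decomposition into pairwise disjoint generators, independently of the decomposition chosen, and that this holds for all PR-states precisely because of no-signaling \eqref{eq:no-sign}. I would prove it by comparing two such decompositions of a fixed set: the elementary operation of switching the input of a single box (holding the remaining boxes fixed) regroups one decomposition into another, and the net change of $\rho_P$ across this operation is exactly the marginal identity \eqref{eq:no-sign}; an arbitrary pair of decompositions is then linked by finitely many such switches. This identifies the ``measurable'' subsets---those carrying a value consistent for every PR-state---with the subsets decomposable into disjoint generators.

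Next I would pin down the admissible propositions and the operation $\oplus$. Because an effect algebra must furnish a negation for each element (axiom E3), a measurable $r$ can stand for a proposition only if its complement $\Gamma^k\setminus r$ is measurable too; conversely any such $r$ is realized as the $\oplus$-sum of its constituent generators. Hence the propositions are exactly
\[
  V = \set{r \subseteq \Gamma^k \setdef \text{$r$ and $\Gamma^k \setminus r$ each decompose into pairwise disjoint generators}}.
\]
For the operation, faithfulness forces $p\oplus q$, if defined, to equal the unique subset with $\rho_P$-value $\rho_P(p)+\rho_P(q)$; evaluating on the deterministic state $P_\omega$ at any $\omega\in p\cap q$ would give $\rho_{P_\omega}(p)+\rho_{P_\omega}(q)=2$, impossible for a state, so $p\cap q=\emptyset$ is necessary, and then $p\cup q$ decomposes into disjoint generators and carries the required value. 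Demanding $p\oplus q=p\cup q\in V$ finally imposes that $\Gamma^k\setminus(p\cup q)$ decompose into disjoint generators, which is the stated rule.

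It remains to verify E1--E4 for $(V,\oplus,\emptyset,\Gamma^k)$, which is routine. E1 is the commutativity of $\cup$. For E2, if $q\oplus r$ and $p\oplus(q\oplus r)$ are defined then $(p\cup q)\ocmpl=r\sqcup(\Gamma^k\setminus(p\cup q\cup r))$ is a disjoint union of measurable sets, so $p\oplus q$ and $(p\oplus q)\oplus r$ are defined and coincide with $p\cup q\cup r$ by associativity of $\cup$. E3 is witnessed by the set complement, which lies in $V$ by the very definition of $V$ and is unique since $p\cap q=\emptyset$ with $p\cup q=\Gamma^k$ forces $q=\Gamma^k\setminus p$. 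E4 holds because $p\oplus\Gamma^k$ requires $p\cap\Gamma^k=\emptyset$, i.e.\ $p=\emptyset$. Concreteness is manifest, as $\oplus$ is realized throughout as disjoint set union. The hard part will be the no-signaling lemma: since a generator fixes one output per box but leaves the other inputs free, two generator-decompositions of the same set generally share no common refinement into generators, and the argument must show that every reshuffling between them nonetheless reduces to the one-box marginal identities \eqref{eq:no-sign}.
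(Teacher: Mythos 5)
Your proposal is correct and its architecture is essentially that of the paper's proof: encode propositions as subsets of $\Gamma^k$ generated by $\mathcal A$, extend each PR-state additively to a state $\rho_P$, note that conversely every state on the structure restricts to a PR-state, and let everything hinge on the lemma that $\rho_P$ is independent of the chosen decomposition into disjoint generators --- the precise point where no-signaling \eqref{eq:no-sign} enters. Where you differ is in supporting detail, mostly to your advantage: you verify E1--E4 explicitly (the paper declares this ``straightforward to check''), and you derive the necessity of $p\cap q=\emptyset$ from deterministic product states $P_\omega$, whereas the paper simply posits the composition rule and defends the complement condition operationally in a remark after the theorem. You are also right to use deterministic states only for separation and for necessity, never for well-definedness: they span only the local polytope, so no mixture argument over them could reach all PR-states.

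The one unfinished step is the one you flag yourself: that any two generator-decompositions of a set are linked by finitely many one-box switches. Be aware that the paper does not prove this either; it merely ``observes'' that the only ambiguously decomposable sets are of the form $\mathcal A_1\times\dots\times\Gamma\times\dots\times\mathcal A_k$, which taken literally is false (in the binary $2$-box notation, $([x0]\times\Gamma)\cup([x1]\times[y0])$ admits two decompositions yet is not of that form), although the ambiguity is always locally of that cylinder type. If you want a complete argument, replace the partition combinatorics by linear algebra on indicator functions: the linear relations among the one-box indicators $\chi_{[a\alpha]}$ in $\reals^\Gamma$ are spanned by $\sum_\alpha\chi_{[a\alpha]}=\sum_\beta\chi_{[b\beta]}$, and since for surjections $\pi_i$ the kernel of $\pi_1\otimes\dots\otimes\pi_k$ equals $\sum_i U_1\otimes\dots\otimes\ker\pi_i\otimes\dots\otimes U_k$, every linear relation among the indicators of elements of $\mathcal A$ is a combination of one-slot relations with the remaining slots held at atoms. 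A PR-state annihilates exactly these by \eqref{eq:no-sign}, and two decompositions of the same set give two representations of its indicator whose difference is such a relation; well-definedness of $\rho_P$ follows without ever needing switch-connectedness of decompositions.
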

\begin{proof}
  It is straightforward to check that $\mathfrak B^{\otimes k}$
  is an effect algebra. We need to show that elements of $\mathfrak
  B^{\otimes k}$ can be identified with propositions of a $k$-box model
  representing questions

  We have already shown how propositions of a $1$-box model
  can be encoded in subsets of $\Gamma$.
  Moreover, any $k$-tuple $(q_1, \dots, q_k)$ of propositions
  of a $1$-box models $\mathfrak B$
  is a proposition on a $k$-box model represented by the experimental question
  \begin{equation*}
    \text{\emph{does for all $i$, $q_i$ is true for the $i$-th box?}}
  \end{equation*}
  Thus, without loss of generality we can encode any such $k$-tuple
  as a Cartesian product $q_1\times\dots\times q_k$.

  Observe now that the only subsets $q\in\mathfrak B^{\otimes k}$ of $\Gamma^k$
  that have non-unique decomposition into elements of $\mathcal A$ are of the
  form
  \begin{equation*}
    \mathcal A_1\times \mathcal A_2\times\dots \times \Gamma \times\dots \mathcal A_k.
  \end{equation*}
  This remark allows us to extend an arbitrary PR-state $P$ 
  to a we well defined state $\rho_P$ on $\mathfrak B^{\otimes k}$ by
  \begin{align}
    \rho_P([a_1\alpha_1,\dots,a_k\alpha_k]) &= P(\alpha_1,\dots,\alpha_k|a_1,\dots,a_k),
                                              \label{eq:pr-on-atoms}\\
    \rho_P(q_1\oplus\dots\oplus q_n) &= \sum_{i=1}^n\rho_P(q_i),
                                       \label{eq:pr-on-sums}
  \end{align}
  where $q_i \in \mathcal A$ and Eq.~\eqref{eq:pr-on-sums}
  is not ambiguous thanks to the no-signaling property of PR-states.

  Conversely, any state $\rho$ on $\mathfrak B^{\otimes k}$ satisfies
  a no-signaling property, thus we can assign a PR-state $P_\rho$ to it by
  \begin{equation*}
    P_\rho(\alpha_1\dots\alpha_k|a_1\dots a_k) = \rho([a_1\alpha_1,\dots,a_k\alpha_k]).
  \end{equation*}

  To sum up, the structure of $\mathfrak B^{\otimes k}$ contains all most
  elementary propositions of a $k$-box model as an atoms,
  $\oplus$-sums of them, and the set of PR-states and states on $\mathfrak
  B^{\otimes k}$ coincide. This suffices to interpret $\mathfrak B^{\otimes k}$
  as an effect algebra of propositions of a $k$-box model.
\end{proof}
\begin{remark}
  From the operational point of view, whenever:
  \begin{equation*}
    \rho_p(q_1) \oplus \dots \oplus \rho_P(q_n) \le 1,\quad
    \forall P,\qquad
    q_i\in\mathcal A
  \end{equation*}
  $q_1\oplus\dots\oplus q_n$ should be defined.
  It is not clear that it implies that
  $\Gamma^k\setminus (q_1\cup\dots\cup q_n)$
  can be expressed as a sum of mutually disjoint elements
  of $\mathcal A$.
  However, if it could not, then adding such elements
  to $\mathfrak B^{\otimes k}$ would result in enlargement
  of the set of atoms, what would be operationally hard
  to interpret.
\end{remark}

Elements of $\mathcal A$ are atoms of the effect algebra
$\mathfrak B^{\otimes k}$ and it is clear that experimental
questions that are representing them are the most
elementary on a $k$-box model.
To simplify our notation, we will write
\begin{equation*}
  ([a_1\in\mathcal A_1], [a_2\in\mathcal A_2],\dots, [a_k\in\mathcal A_k])
  \equiv
  [a_1\in\mathcal A_1]\times\dots\times[a_k\in\mathcal A_k]
  \equiv
  [a_1\in\mathcal A_1 \dots a_k\in \mathcal A_k].
\end{equation*}
Elements of the form
\begin{equation*}
  [\bbone a_2\in\mathcal A_2 \dots a_k\in\mathcal A_k]
\end{equation*}
will be called localized in the boxes $\set{2, \dots, k}$.
Analogously we define propositions localized in
an arbitrary subset of boxes; in particular, we say that
\begin{equation*}
  [\bbone \bbone \dots \underbrace{a\in\mathcal A}_{i\text{-th}} \dots \bbone]
\end{equation*}
is localized in the $i$-th box.

In order to interpret a $k$-box model as a composite system
of $k$ separate boxes (that could be put in spacelike
separate regions of a spacetime), we require that
propositions localized in a different subsets of boxes
are compatible.
It is easy to see that indeed this is the case. Consider
\begin{equation*}
  [\bbone \bbone \dots \underbrace{a\in\mathcal A}_{i\text{-th}} \dots \bbone]
  \text{ and }
  [\bbone \bbone \dots \underbrace{b\in\mathcal B}_{j\text{-th}} \dots \bbone].
\end{equation*}
Then
\begin{equation*}
  [\bbone \bbone \dots \underbrace{a\in\mathcal A}_{i\text{-th}} \dots \bbone]
  = \bigoplus_{\beta\in\mathcal U_b}
  [\bbone \bbone \dots \underbrace{b\beta}_{j\text{-th}} \dots
  \underbrace{a\in\mathcal A}_{i\text{-th}} \dots \bbone]
\end{equation*}
and
\begin{equation*}
  [\bbone \bbone \dots \underbrace{b\in\mathcal B}_{j\text{-th}} \dots \bbone]
  = \bigoplus_{\alpha\in\mathcal U_a}
  [\bbone \bbone \dots \underbrace{b\in\mathcal B}_{j\text{-th}} \dots
  \underbrace{a\alpha}_{i\text{-th}} \dots \bbone],
\end{equation*}
so the requirements of Def.~\ref{thm:compat-def} are clearly satisfied.

In Ref.~\onlinecite{tylec2015-2} we constructed a propositional
system of an arbitrary $2$-box model in a similar fashion as in
Thm.~\ref{thm:effect-algebra}. It was a concrete orthomodular poset generated as a
sublogic of Boolean algebra of subsets of $\Gamma\times \Gamma$ by the set
$\mathcal A$, the same as in Thm.~\ref{thm:effect-algebra}.
It follows from the Lemma 11 of Ref.~\onlinecite{tylec2015-2} that
both constructions coincide.
However, $\mathfrak B^{\otimes k}$ is not an orthomodular poset in general,
as the following example shows.

\begin{example}
  Consider a $3$-box model described by $\mathfrak B_\bin^{\otimes 3}$.
  It is known\cite{Almeida2010,Fritz2013} that there exists a PR-state $P$ such that
  \begin{equation*}
    P([x0x0x0]) + P([x1y1y0]) + P([y0x1y1]) + P([y1y0x1]) = \frac43 > 1.
  \end{equation*}
  However all sets $[x0x0x0], [x1y1y0], [x0x1y1]$ and $[y1y0x1]$ are
  mutually disjoint. Actually, we can show that $[x0x0x0]\oplus[x1y1y0]\oplus[y0x1y1],
  [y0x1y1]\oplus[y1y0x1]$ and $[x0x0x0]\oplus[x1y1y0]\oplus[y1y0x1]$ exist but
  $[x0x0x0]\oplus[x1y1y0]\oplus[y0x1y1]\oplus[y1y0x1]$ does not,
  so $\mathfrak B_\bin^{\otimes 3}$
  cannot be organized into an orthomodular poset
  (cf.~Def.~\ref{thm:coherence-law}).

  The effect algebra $\mathfrak B_\bin^{\otimes 3}$ can be constructed explicitly.
  It has 28886 elements and $4^3 = 64$ atoms.
\end{example}

Consequently we proved:

\begin{thm}
  The propositional system of a $k$-box model is an orthomodular lattice
  for $k=1$ and orthomodular poset for $k=2$; otherwise it is an effect
  algebra.
\end{thm}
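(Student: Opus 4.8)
The plan is to treat the statement as the synthesis it is announced to be: the phrase ``Consequently we proved'' signals that each clause has already been established above, so the proof only has to record, for each regime of $k$, the strongest structure carried by the general construction. The common backbone is Theorem~\ref{thm:effect-algebra}, which already guarantees that $\mathfrak B^{\otimes k}$ is a concrete effect algebra for \emph{every} $k$; this alone settles the ``otherwise'' clause, so it remains to upgrade the structure in the two low-box cases and to check that no such upgrade is available in general.

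First I would dispose of $k=1$. Before the binary-box example, $\mathfrak B = \mathfrak B^{\otimes 1}$ was identified explicitly with the family of cylinder subsets $\mathcal U_1\times\cdots\times\mathcal A\times\cdots\times\mathcal U_n$ of $\Gamma$, ordered by inclusion with $A\ocmpl = \Gamma\setminus A$; two such cylinders always have a meet and a join lying again in the family, so this concrete logic is an orthomodular \emph{lattice}, which is precisely the $k=1$ assertion.

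Next, for $k=2$, I would invoke the identification already flagged in the text: restricted to $k=2$, the effect algebra of Theorem~\ref{thm:effect-algebra} coincides, by Lemma~11 of Ref.~\onlinecite{tylec2015-2}, with the concrete orthomodular poset constructed there as a sublogic of the Boolean algebra on $\Gamma\times\Gamma$. Hence $\mathfrak B^{\otimes 2}$ is an orthomodular poset. Equivalently, one may verify intrinsically that in the two-box case every pairwise-orthogonal family admits its $\oplus$-sum, i.e.\ the coherence law of Def.~\ref{thm:coherence-law} holds, so the effect algebra rises to an orthoposet via the quoted Ref.~\onlinecite{dvurecenskij2000new}, Thm.~1.5.5.

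Finally, the point of isolating the regime $k\ge 3$ is that there one cannot sharpen ``effect algebra'' to ``orthomodular poset,'' and this is exactly what the $\mathfrak B_\bin^{\otimes 3}$ example supplies: it exhibits four pairwise-orthogonal atoms whose joint $\oplus$-sum fails to exist, even though various triple $\oplus$-sums among them are defined, so the coherence law fails and, by Ref.~\onlinecite{dvurecenskij2000new}, Thm.~1.5.5, the effect algebra is not an orthomodular poset. The one place that deserves care is the scope of this obstruction: the example is literally about three binary boxes, so to conclude that the orthoposet structure is unavailable for \emph{all} $k\ge 3$ I would embed the violating configuration into an arbitrary $k$-box model by padding the remaining $k-3$ boxes with the trivial proposition $\bbone$, which lifts the four atoms while preserving their pairwise orthogonality and the failure of the joint $\oplus$-sum. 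That embedding---rather than any single computation---is the only genuinely load-bearing step, since the positive, effect-algebra half of each case is already handed to us by Theorem~\ref{thm:effect-algebra}.
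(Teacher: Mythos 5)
Your proposal follows the paper's own route exactly: the general effect-algebra structure from Theorem~\ref{thm:effect-algebra}, the concrete orthomodular lattice identification for $k=1$, the coincidence with the orthoposet construction via Lemma~11 of Ref.~\onlinecite{tylec2015-2} for $k=2$, and the binary $3$-box LO-violating configuration to show the coherence law (and hence orthoposet structure) fails in general beyond $k=2$. Your padding step for $k>3$ is the one detail the paper leaves tacit, and it is sound: pairwise orthogonality survives because each $3$-box complement decomposition lifts to $\Gamma^k$ by expanding the padded boxes over any fixed input, while the joint $\oplus$-sum still fails because the violating PR-state extends as a product state to the $k$-box model, where states and PR-states coincide by Theorem~\ref{thm:effect-algebra}.
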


An important property of composed systems is that
the order of composition is irrelevant. This motivates
the following definition:

\begin{dfn}
  Let $(\Gamma_1, \mathfrak B_1), (\Gamma_2, \mathfrak B_2)$ be
  atomistic concrete effect algebras.
  The \emph{box-product} of $\mathfrak
  B_1$ and $\mathfrak B_2$ is defined as a concrete effect algebra
  $(\Gamma_1\times\Gamma_2, \mathfrak B_1\boxtimes\mathfrak B_2)$
  where $\mathfrak B_1\boxtimes\mathfrak B_2$ is generated from
  the set
  \begin{equation*}
    \mathcal A = \set{a\times b\setdef a\in\atoms(\mathfrak B_1), b\in\atoms(\mathfrak B_2)}
  \end{equation*}
  by a partially defined binary operation
  \begin{equation*}
    p\oplus q =
    \begin{cases}
      p\cup q &
      \begin{lgathered}[t]
        \text{ if } p\cap q = \emptyset \text{ and }
        \Gamma_1\times\Gamma_2\setminus (p\cup q)
        \text{ can be decomposed into} \\
          \text{a sum of mutually disjoint sets from }\mathcal A
      \end{lgathered} \\
      \text{not defined} & \text{otherwise}.
    \end{cases}
  \end{equation*}
\end{dfn}

\begin{remark}
  The box product is associative.
\end{remark}
\begin{proof}
  Since the Cartesian product is associative we have that
  \begin{gather*}
    \set{a\times (b \times c)\setdef a\in\atoms(\mathfrak B_1), b\in\atoms(\mathfrak B_2),
    c\in\mathfrak(\atoms B_3)} = \\
    \set{(a\times b) \times c)\setdef a\in\atoms(\mathfrak B_1), b\in\atoms(\mathfrak B_2),
    c\in\mathfrak(\atoms B_3)}
  \end{gather*}
  and $\Gamma_1\times(\Gamma_2\times\Gamma_3) =
  (\Gamma_1\times\Gamma_2)\times\Gamma_3$.
  Consequently, the construction gives precisely the same elements
  in $\mathfrak B_1\boxtimes(\mathfrak B_2\boxtimes\mathfrak B_3)$
  as in $(\mathfrak B_1\boxtimes\mathfrak B_2)\boxtimes\mathfrak B_3$
\end{proof}

In other words, the propositional system of a $k$-box model
if a $k$-fold box product $\mathfrak B^{\boxtimes k}$ of $1$-box
propositional models.
The set of PR-states coincide with the set of all states on this effect algebra.

\section{Relation to Local Orthogonality Principle}
\label{sec:lo-princ}

Presented results shed new light on the recently proposed Local Orthogonality
Principle\cite{Fritz2013,Sainz2014}.
In short, it puts additional restriction on the set of allowed correlations in
the box models, i.e.\ on the set of admissible PR-states.
Quoting Ref.~\cite{Sainz2014}, it is phrased in the following way:

\begin{dfn}
  Consider a $k$-box model.
  An \emph{event} consists of the $k$-tuple of inputs 
  $a_1,\dots,a_k$ and the corresponding $k$-tuple of outputs
  $\alpha_1,\dots,\alpha_k$.
  Two events are \emph{orthogonal} whenever at least one of the inputs
  coincides in two events but the corresponding output is different.
  Any set of mutually orthogonal events defines an
  \emph{Local Orthogonality (LO) inequality}
  by requiring that the sum of probabilities for such set of
  events is less than or equal to 1.
\end{dfn}

In the notation adopted in this paper, an event
is an atom $[a_1\alpha_1\dots a_k\alpha_k]$ of an effect algebra
of a box model.
Two events $[a_1\alpha_1\dots a_k\alpha_k]$ and
$[b_1\beta_1\dots b_k\beta_k]$ are orthogonal if and only if
\begin{equation*}
  [a_1\alpha_1\dots a_k\alpha_k]\cap [b_1\beta_1\dots b_k\beta_k] = \emptyset.
\end{equation*}
Finally, the Local Orthogonality inequality for $n$ mutually orthogonal
events $q_1,\dots,q_n$ is equivalent to the statement
that for any state $\rho$ on an effect algebra of box model
fulfills
\begin{equation*}
  \rho(q_1) + \rho(q_2) + \dots + \rho(q_n) \le 1.
\end{equation*}
In other words, $q_1\oplus\dots\oplus q_n$ is defined.
From what was said before, it is clear that in general 
it is not satisfied for box models with more than $2$ components.

The Local Orthogonality (LO) principle restricts set of states
to those that do not violate any of Local Orthogonality inequalities.
Furthermore, the LO$^\infty$ principle is introduced\cite{Fritz2013,Sainz2014}
that allows only such states on $k$-box model that when copied $n$ times
do not violate any of Local Orthogonality inequalities in a $nk$-box model
for an arbitrary $n$.

Our approach allows a dual look on this problem.
Instead of restricting the set of states by the Local Orthogonality principle,
we can extend the set of elements in the logic of a $k$-box model.
In the simplest case we can generate a concrete orthomodular poset
by $\mathcal A$ of Thm.~\ref{thm:effect-algebra}. 
In this way we obtain a structure on which all states will satisfy
all Local Orthogonality inequalities by definition.

\begin{example}
  Consider a 3-box model consisting of $\mathfrak B_\bin$ boxes.
  A concrete orthomodular poset $\mathfrak L$ of subsets
  of $\Gamma_\bin\times \Gamma_\bin\times \Gamma_\bin$ generated by
  \begin{equation*}
    \mathcal A = \set{a\times b\times c\setdef a,b,c\in\atoms(\mathfrak B_\bin)}
  \end{equation*}
  consists of 29142 elements and 192 atoms.
  All atoms of $\mathfrak B_\bin^{\otimes 3}$ are also atoms of
  $\mathfrak L$. Additional atoms were generated by complements of
  $\oplus$-sums of mutually disjoint elements that do not exists in
  the effect algebra $\mathfrak B_\bin^{\otimes 3}$.

  What is interesting, despite much richer set of atoms,
  the localized elements in $\mathfrak L$ and $\mathfrak B_\bin^{\otimes 3}$
  are exactly the same. Thus we can regard $\mathfrak L$
  as a different way of producing a composite $3$-box system.
  We emphasize that in $\mathfrak L$ all LO inequalities are satisfied in any
  state, however we might expect that not all product states of components
  are admissible (cf. Ref.~\onlinecite{Fritz2013}).

  Additional propositions in $\mathfrak L$ can be interpreted as genuinely
  multi-box propositions (this should not pose any interpretational difficulties
  if we recall that in the quantum mechanics we have projectors onto entangled
  vectors).
\end{example}

This example indicates that the LO$^\infty$ principle is equivalent
to the statement that the proper way of producing the logic of composite
$k$-box models is to generate an orthomodular poset instead of an effect
algebra. That puts another restrictions on the set of states apart from
no-signaling conditions (since there are more propositions, the set
of states has to be smaller).
States of subsystems have to be restricted accordingly
so that all produce valid product states.
The example also suggests that the quantum bound for correlations
will not be attained even with LO$^\infty$ principle,
since we know that in the orthomodular poset
it can be violated (c.f.\ the example of the $2$-box model).
However this remark is rather a hypothesis than a theorem:
rigorous study of such kind of ``orthoposet box-product'' is required.

\section{Conclusions}
\label{sec:conclusions}

Our results allows us to compare no-signaling box theories with classical and
quantum theories using the same mathematical language. The results are
summarized in the diagram~\ref{fig:structures}.
With no doubt no-signaling box theories posses properties that neither
classical nor quantum mechanical systems do exhibit.
However, detailed mathematical analysis reveals that
the reason \emph{why} they have such properties is completely different.

\begin{figure}[tb]
  \centering
  \begin{tikzpicture}
    \tikzset{ label/.style = { text centered, text width=7em, node distance=2cm,
        execute at begin node=\setlength{\baselineskip}{11pt},
        font=\fontsize{9pt}{11pt}\selectfont }, toplabel/.style = { label, text
        centered, text width=10em, node distance=1.5cm, }, cl/.style={ label,
        node distance=4cm, rectangle, rounded corners, draw=black, very thin,
        minimum height=2em, inner sep=7pt, }, qm/.style={cl, minimum height=5em,
        node distance=2.5cm}, hyp/.style={qm, dashed, fill=white, pattern=north
        west lines, pattern color=white!70!black}, embed/.style={right hook->,
        shorten <=3pt, shorten >=3pt} } \node[cl] (CL1) {Boolean algebra of
      subsets of $\Gamma$}; \node[cl, right of=CL1] (CL2) {Boolean algebra of
      subsets of $\Gamma\times\Gamma$}; \node[cl, right of=CL2] (CLk) {Boolean
      algebra of subsets of $\Gamma^{\times k}$};

    \node[qm, above of=CL1] (Q1) {Orthomodular lattice of projectors on the
      Hilbert space $\mathfrak H$}; \node[qm, above of=CL2] (Q2) {Orthomodular
      lattice of projectors on the Hilbert space $\mathfrak H\otimes\mathfrak
      H$}; \node[qm, above of=CLk] (Qk) {Orthomodular lattice of projectors on
      the Hilbert space $\mathfrak H^{\otimes k}$};

    \node[qm, below of=CL1] (B1) {Orthomodular lattice of subsets of $\Gamma$
      generated by $\mathcal A$}; \node[qm, below of=CL2, node distance=5cm]
    (B2) {Orthomodular poset of subsets of $\Gamma\times\Gamma$ generated by
      $\mathcal A$}; \node[qm, below of=CLk, node distance=7.5cm] (Bk) {An
      effect algebra of subsets of $\Gamma^{\times k}$ generated by $\mathcal
      A$};

    \node[toplabel, above of=Q1] {single system}; \node[toplabel, above of=Q2]
    {double system}; \node[toplabel, above of=Qk] {$k$-component system};

    \draw[embed] (CL1) -- (Q1); \draw[embed] (CL2) -- (Q2); \draw[embed] (CLk)
    -- (Qk);

    \draw[embed] (B1) -- (CL1); \draw[embed] (B2) -- (CL2); \draw[embed] (Bk) --
    (CLk);

    \node[left of=CL1, node distance=2.5cm] (A) {}; \node[left of=A, node
    distance=0.7cm] (B) {};
  
    \draw[very thick, ->] (A) -- ++(0, 4cm) node[label, text width=15em,
    rotate=90, yshift=10pt, xshift=-5cm] {number of states}; \draw[very thick,
    ->] (A) -- ++(0, -8cm); \draw[very thick, ->] (B |- 0, -8cm) -- ++(0, 12cm)
    node[label, text width=15em, rotate=90, yshift=10pt, xshift=-5cm] {number of
      propositions};

    \draw[dashed] (CL1 |- 0,1cm) ++ (-2cm, 0cm) -- ++(14cm, 0cm); \draw[dashed]
    (CL1 |- 0,-1cm) ++ (-2cm, 0cm) -- ++(14cm, 0cm);

    \node[label, right of=Qk, rotate=-90, node distance=3cm] (QL)
    {\textsc{Quantum Mechanics}}; \node[label, right of=CLk, rotate=-90, node
    distance=3cm] (CLL) {\textsc{Classical\\Physics}}; \node[label, below
    of=CLL, rotate=-90, node distance=2.5cm] {\textsc{No-signaling\\box
        theories}};

    \node[hyp, below of=CL2] (OMLB2) {Orthomodular lattice of subsets of
      $\Gamma\times\Gamma$ generated by $\mathcal A$};

    \node[hyp, below of=CLk] (OMLBk) {Orthomodular lattice of subsets of
      $\Gamma^{\times k}$ generated by $\mathcal A$};

    \node[hyp, below of=OMLBk] (OMPBk) {Orthomodular poset of subsets of
      $\Gamma^{\times k}$ generated by $\mathcal A$};
  \end{tikzpicture}
  
  \caption{Diagram shows how mathematical structures describing no-signaling box
    theories are related to structures of classical and quantum physics. Hooked
    arrows represent possibility of embedding. Arrows on the left indicate
    direction of increasing number of states and propositions. Dashed cells
    represent objects that can be constructed, however their properties where not
    yet studied. Notation is the same as in the Thm.~\ref{thm:effect-algebra}.}
  \label{fig:structures}
\end{figure}
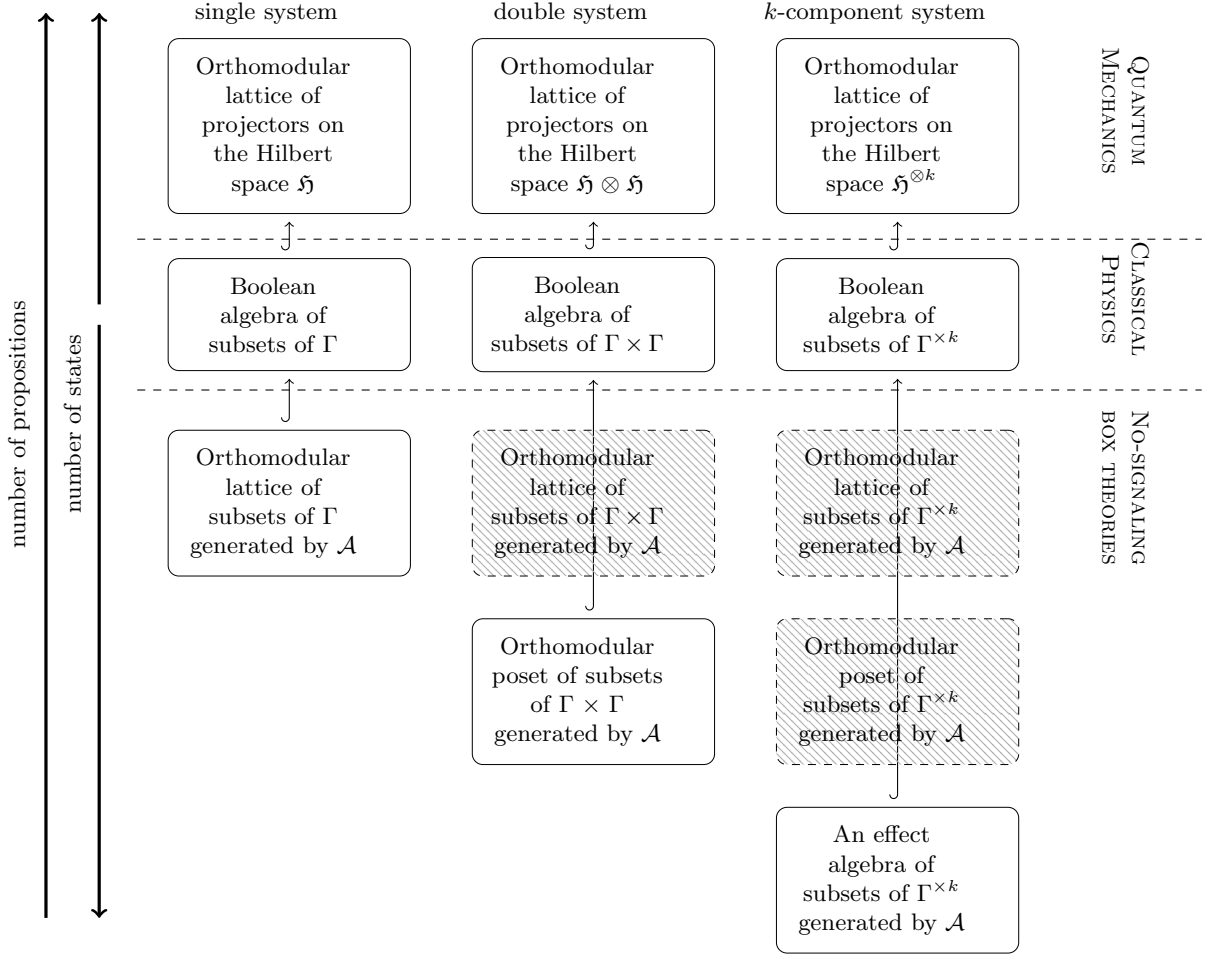

When we pass from classical to quantum theories,
both numbers, of propositions and states, increase.
One can think of quantum models
(even such toy models as qubit,
where canonical commutation relation cannot be realized)
as an infinite (of \emph{continuum} cardinality!)
collections of classical models:
one model for each maximal set
of mutually commuting observables.

The multitude of propositions and states
increases even more when we pass to composite systems.
The tensor product of Hilbert spaces
produces non-separable states and propositions.
We need all states, both separable and non-separable,
to determine the order in the set of all propositions\cite{pulmannova1985tensor}.

For no-signaling box theories the situation is dramatically different.
We indeed get more states than in the classical case, but only because
we decreased number of propositions describing our system.
With increasing number of components our ``ignorance'',
i.e.\ the number of propositions we declare to be non-verifiable,
increases. This leads to orthomodular poset structure
for arbitrary $2$-box model and an effect algebra for a $k$-box model.
While the first manifested itself in violation
of CHSH-type inequality for $2$-box models,
the other results in violation of the 
Local Orthogonality principle\cite{Fritz2013}.

Moreover, the order of propositions in $\mathfrak B^{\otimes k}$
of $k$-box model is always determined by the classically correlated states,
i.e.\ states obtained by restrictions of classical states
on $\Gamma^{\times k}$ to $\mathfrak B^{\otimes k}$.
This follows from the fact that all these structures are \emph{concrete}.

Consequently, no-signaling box models are clearly not generalizations of quantum
mechanics. While quantum mechanics generalizes classical in the ``direction''
of enrichment, no-signaling boxes generalize classical mechanics in the
``direction'' of impoverishment.

One can argue that if we take the set of states as a primary object,
instead of the propositional system,
we can still say that box models generalize quantum
models because we have states that are not quantum.
The are however two problems with this argument.
Firstly, while it is obviously true that propositional system
of a box system can be embedded into quantum mechanical model,
the set of all states on a box model cannot.
But neither we can embed in the opposite direction.
We can map any quantum state on the state
on a box model but this map is not injective
(it is a restriction map).

Secondly, the state-based approach to physical theories
is far more complicated than the observable-based one.
The characterization of those
convex sets that are set of states of operator algebras
(thus establishing full equivalence of Schr\"odinger and Heisenberg
pictures in quantum mechanics)
was obtained quite recently\cite{alfsen2003geometry,alfsen2012state}
and is far from being trivial.
Let us remark that (probably) the first serious steps in that direction
were taken by Mielnik~\cite{mielnik1974generalized} in 1974
but his program failed.
This follows from the fact that the first-class properties of
algebraic objects, whether these are operator algebras
or quantum logic structures, manifests itself in very subtle
geometrical properties of convex sets of states.
Thus comparison of generality of theories on this level
is rather complicated.

Finally, as was mentioned at the end of Sec.~\ref{sec:lo-princ},
our framework might be fruitful in the further investigations
of the Local Orthogonality principle, since it shifts attention
from states to sets of propositions that have more tractable structure.

\acknowledgements{
This work was done with the support of John Templeton Foundation grant.
TT work was supported by the University of Gda\'nsk grant No.~538-5400-B295-16
}

\bibliography{library1}

\end{document}